\newcommand{\eqdef}{\triangleq}
\newcommand{\set}[1]{\mathcal{#1}}
\newcommand{\markov}{\textnormal{\mbox{$\multimap\hspace{-0.73ex}-\hspace{-2ex}-$}}}
\newcommand{\calX}{\mathcal{X}}
\newcommand{\ol}{\overline}
\newcommand{\mc}{\mathcal}
\renewcommand{\d}{\mathrm{d}}
\newtheorem{theorem}{Theorem}
\newtheorem{remark}{Remark}
\newtheorem{definition}{Definition}
\theoremstyle{remark}
\begin{document}

\title{Coordination in State-Dependent Distributed Networks: The Two-Agent Case}

 \author{
 \authorblockN{Benjamin Larrousse}
 \authorblockA{
 Centrale-Supelec\\
 91192 Gif-sur-Yvette, France\\
benjamin.larrousse@lss.supelec.fr}  \and 
 \authorblockN{Samson Lasaulce}
 \authorblockA{
CNRS and Centrale-Supelec\\
91192 Gif-sur-Yvette, France\\
lasaulce@lss.supelec.fr}
\and
 \authorblockN{Mich\`ele~Wigger}
 \authorblockA{
 Telecom ParisTech\\
 75013 Paris, France\\
 michele.wigger@telecom-paristech.fr}
}
\maketitle

\begin{abstract} This paper addresses a coordination problem between two 
agents (Agents~$1$ and $2$) in the presence of a noisy communication channel which depends on an external system state $\{x_{0,t}\}$.  The channel takes as inputs both agents' actions, $\{x_{1,t}\}$ and $\{x_{2,t}\}$ and produces outputs that are observed strictly causally at Agent~$2$ but not at Agent~$1$. The system state is available either causally or non-causally at Agent~$1$ but unknown at Agent~$2$. Necessary and sufficient conditions on a joint distribution $\overline{Q}(x_0,x_1,x_2)$  to be implementable asymptotically (i.e, when the number of taken actions grows large) are provided for both causal and non-causal state information at Agent~$1$.
 
Since the coordination degree between the agents' actions, $x_{1,t}$ and $x_{2,t}$, and the system state $x_{0,t}$ is measured in terms of an average payoff function, feasible payoffs are fully characterized by implementable joint distributions. In this sense, our results allow us e.g., to derive the performance of optimal power control policies on an interference channel and to assess the gain provided by non-causal knowledge of the system state at Agent~$1$.

The derived proofs readily yield new results also for the problem of state-communication under a causality constraint at the decoder.  \end{abstract}

\section{Introduction}
\label{sec:introduction}

Performance characterizations of general distributed networks with agents that observe the system state and the actions of some of the other agents, is a prominent open problem  also studied by related disciplines such as control \cite{basar-book-2013} and game theory \cite{aumann-handbook-1994}. In this paper, we contribute to the solution of a special case of this general problem, by treating it as a coordination problem that can be solved using 
joint-source channel codes. This approach has recently been proposed in \cite{Larrousse-isit2013}, see also \cite{Gossner-2006}, {and is expected to extend to setups with more than two agents and  to different observation structures.}

The technical setup under investigation is as follows. We consider two agents that select their actions repeatedly over $T\geq1$ stages (or time-slots) and that wish to coordinate via their actions in the presence of a random system state. At each stage $t\in\{1,\dots,T\}$, the action of Agent $k \in \{1,2\}$ is $x_{k,t}\in\mathcal{X}_k$, with $|\mathcal{X}_k| < \infty$, and the realization of the random system state is $x_{0,t}\in \mathcal{X}_0$ with $|\mathcal{X}_0|< \infty$. The state sequence $X_{0,1}, \ldots, X_{0,T}$ is given by nature and its components are independent and identically distributed (i.i.d.) according to a distribution~$\rho_0$. 

Suppose that each agent has an individual payoff function   $\omega_k\colon \set{X}_0 \times \set{X}_1\times \set{X}_2 \to \mathbb{R}$, for $k\in\{1,2\}$, that is affected by both agents' actions and the nature state. We are interested in determining the set of feasible expected average payoffs 
\begin{equation}\label{eq:payoff}
\overline{\omega}_k^{(T)}= \mathbb{E}  \left[ \frac{1}{T} \sum_{t=1}^T \omega_k(X_{0,t}, X_{1,t}, X_{2,t})  \right],\; k\in\{1,2\},
\end{equation}
that are reachable by some strategies for the agents. This set of feasible expected average payoffs is fully characterized by the set of  feasible averaged distributions on the triples $\{(X_{0,t}, X_{1,t},X_{2,t})\}_{t=1}^T$. In fact,  denoting by $P_{X_{0,t}X_{1,t}X_{2,t}}$ the joint distribution of the time-$t$ tuple $(X_{0,t}, X_{1,t},X_{2,t})$, we have
\begin{IEEEeqnarray*}{rCl}
\overline{\omega}_k^{(T)} &\eqdef&\frac{1}{T} \sum_{t=1}^T \mathbb{E}\left[ \omega_k(X_{0,t}, X_{1,t}, X_{2,t}) \right]  \nonumber \\
&= & \sum_{\substack{x_0\in\set{X}_0\\x_1\in\set{X}_1\\ x_2\in\set{X}_2} }\omega_k(x_{0}, x_{1}, x_{2})   \frac{1}{T} \sum_{t=1}^T  P_{X_{0,t}X_{1,t} X_{2,t}}(x_{0}, x_{1},  x_{2}). \IEEEeqnarraynumspace
\end{IEEEeqnarray*} 
Our main goal in this paper is to determine the set of averaged distributions  $\frac{1}{T} \sum_{t=1}^T P_{X_{0,t}X_{1,t}X_{2,t}}(x_0, x_{1},x_2)$ that can be induced by the agents' strategies. For simplicity, and in order to obtain closed form expressions, we shall focus on the limit $T\to \infty$. 

%

We consider two kinds of scenarios with two different observation structures. In the first scenario---referred to as \emph{non-causal coding}---Agent~$1$ observes the system states \emph{non-causally}. That means, at each stage $t\in\{1,\ldots, T\}$ it knows the entire state sequence $X_0^T=(X_{0,1},\ldots, X_{0,T})$. In the second scenario----referred to as \emph{causal coding}---Agent $1$ learns the states only \emph{causally}. Thus, here, at each stage $t\in\{1,\ldots, T\}$, Agent~$1$ only knows $X_0^t$. 

In both scenarios, Agent $2$ has no direct access to the state nor to Agent~$1$'s actions. Instead, \emph{after} each stage $t$,  Agent~$2$ observes the output 
$y_t\in \mathcal{Y}$, with $|\mathcal{Y}|<\infty$, of a discrete memoryless multi-access channel  that takes as inputs the two agents' actions and the system state. The multi-access channel is assumed memoryless and of transition law   $\Gamma$:
\begin{IEEEeqnarray}{rCl}\label{eq:DMC}
\lefteqn{\textnormal{Pr}\big[Y_t=y_t| X_0^{t}=x_0^t, X_1^t=x_1^t, X_2^t=x_2^t, Y^{t-1}=y^{t-1} \big]}\qquad  \nonumber \\
 & &= \Gamma(y_t|x_{0,t},x_{1,t},x_{2,t}),\hspace{4cm}
\end{IEEEeqnarray}
where throughout this paper we use the shorthand notations $A^m$ and $a^m$ for the tuples $(A_1,\ldots, A_m)$ and $(a_1,\ldots, a_m)$, when $m$ is a positive integer.

The scenario with non-causal coding was introduced in \cite{lasaulce-erice-2013, larrousse-tit-2015-sub}. Special cases, had previously been considered in \cite{Gossner-2006, Cuff-2011, Larrousse-isit2013,letreust-tit-2014-sub}.  Most prominently, Gossner et al \cite{Gossner-2006} solved the first instance of our problem. They considered the case where Agent $2$ can observe strictly causally the system states and Agent~$1$' actions, which corresponds in our setup to  \eqref{eq:DMC} describing the channel  $y_t=(x_{0,t}, x_{1,t})$. Cuff and Zhao presented an alternative proof \cite{Cuff-2011} of the results in \cite{Gossner-2006} based on more traditional  information-theoretic tools and under the framework of ``coordination via actions". The \emph{noisy} communication channel was introduced by  Larrousse et al. in \cite{Larrousse-isit2013}; their channel however did not depend on the system state nor on Agent~$2$'s actions. This same special case has also been addressed by Le Treust in  \cite{letreust-tit-2014-sub}. 

In the present work, we provide a converse proof for the general scenario with non-causal coding {that} establishes optimality of a scheme proposed in \cite{larrousse-tit-2015-sub}. 
We also solve the scenario with causal coding. Our result shows that in this case the agents' optimal strategies are simple and ignore all communication over the channel.

We exemplify our findings at hand of a power control problem. In particular, for this problem, we quantify the loss incurred in performance when the coding has to be performed causally instead of non-causally. 

At last,  the problem under investigation is linked to the information-theoretic \emph{state-communication problem} \cite{ChoudhuriKimMitra-2013,Sutivong2005}. In fact, the proof techniques derived for the coordination problems, immediately yield new results on state-communication when the decoder is restricted to be causal or strictly causal.

\section{Problem formulation and main results}
\label{sec:pb-formulation-main-result}

As explained previously, the distributed network or system comprises two 
agents. These agents take actions in a repeated manner according to their strategies. Strategies are sequences of functions  defined by:
\begin{align}\label{eq:strategies-I}
\text{case NC:}\quad&\left\{
\begin{array}{ccccc}
\sigma_t^{\mathrm{NC}} & : & \calX_0^T & \rightarrow & \calX_1\\
\tau_t& : &  \mathcal{Y}^{t-1} & \rightarrow & \calX_2
\end{array}
 \right.\\\label{eq:strategies-II}
\text{case C:}\quad&\left\{
\begin{array}{ccccc}
\sigma_t^{\mathrm{C}} & \ \ : & \calX_0^t & \rightarrow & \calX_1\\
\tau_t & \ \ : & \mathcal{Y}^{t-1} & \rightarrow & \calX_2
\end{array}
 \right..
\end{align}
where $\mathrm{NC}$ (resp. $\mathrm{C}$) stands for non-causal (resp. causal) coding and for $\mathrm{c} \in  \{ \mathrm{C},  \mathrm{NC}\}$, the functions $(\sigma_t^{\mathrm{c}})_{1\leq t \leq T}$ (resp. $(\tau_t)_{1\leq t \leq T}$ ) describe the strategies employed by Agent $1$ (resp. $2$). The main problem is to characterize the set of joint probability distributions that can be reached when $T\rightarrow \infty$ which we call the set of implementable distributions according to the terminology of  \cite{Larrousse-isit2013}, \cite{Gossner-2006}. Specifically: 

\begin{definition}[Implementability]\label{def:implementability}
For $\mathrm{c} \in \{\mathrm{C}, \mathrm{NC}\}$, the probability distribution $\ol{Q}(x_0,x_1,x_2)$ is implementable if for every $\epsilon>0$ and every sufficiently large $T$,  there exists a pair of strategies $(\sigma_t^{\mathrm{c}}, \tau_t)_{1\leq t \leq T}$ inducing at each stage~$t$ a joint distribution
\begin{align}
  &\mathrm{P}_{X_{0,t}X_{1,t}X_{2,t}} (x_0,x_1,x_2) \eqdef  
 \mathrm{P}_{X_{1,t}X_{2,t}|X_{0,t}}(x_1,x_2|x_0) \rho_0(x_0)\label{eq:form_induced_distribution}
\end{align}
such that for all $(x_0,x_1,x_2)\in \calX_0 \times \calX_1 \times \calX_2$:
\begin{align}
 &\bigg| \frac{1}{T} \sum_{t=1}^{T}   \mathrm{P}_{X_{0,t}X_{1,t}X_{2,t}}(x_0,x_1,x_2) - \ol{Q}(x_0,x_1,x_2)\bigg| \leq \epsilon. \label{eq:implementable}
\end{align}
\end{definition}

We now characterize the set of implementable probability distributions both for causal and non-causal coding. 

\begin{theorem}[Non-causal coding] \label{thm:non-causal} Let $\mathrm{c}=\mathrm{NC}$. Consider a joint  probability distribution $\overline{Q}$ such that $\sum_{x_1,x_2} \ol{Q}(x_0,x_1,x_2) = \rho_0(x_0)$. The distribution $\overline{Q}$ is implementable if and only if it satisfies the following condition{\footnote{{The notation $I_Q(A;B)$ indicates that the mutual information should be computed with respect to the probability distribution $Q$.}}}
\begin{equation}\label{eq:MI_cond}
{I_Q(X_0;X_2) \leq I_Q(V;Y|X_2)  - I_Q(V;X_0|X_2)}
\end{equation}
with $Q(x_0,x_1,x_2,y,v) = \ol{Q}(x_0,x_1,x_2 )\Gamma(y|x_0,x_1,x_2)$ $\times \mathrm{P}_{V|X_0 X_1 X_2}(v|x_0,x_1,x_2)$, and $V$ being an auxiliary random variable which alphabet cardinality can be restricted as $|\mathcal{V}| \leq |\mathcal{X}_0|\cdot|\mathcal{X}_1|\cdot|\mathcal{X}_2|$.
\end{theorem}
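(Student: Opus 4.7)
The plan is to prove the ``if'' and ``only if'' parts separately, combining Gelfand-Pinsker channel coding with a coordination/binning idea.

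For achievability (``if'') I would use the block-Markov scheme of \cite{larrousse-tit-2015-sub}: partition the horizon into $B$ blocks of length $n$, pre-generate a codebook of $2^{nR}$ sequences $x_2^n(m)\sim\prod_t P_{X_2}(x_{2,t})$ (where $P_{X_2}$ is the marginal of $\overline{Q}$) together with a nested Gelfand-Pinsker auxiliary codebook $v^n$. In block $b$, Agent~2 plays $x_2^n(\hat m_{b-1})$, the sequence decoded at the end of the previous block, while Agent~1 (which knows all states non-causally) does three things: (i) picks $m_b$ so that $x_2^n(m_b)$ is jointly typical with the \emph{next} block's state $x_0^n[b+1]$, which requires $R > I_Q(X_0;X_2)$; (ii) transmits $m_b$ during block $b$ by Gelfand-Pinsker coding with non-causal state $x_0^n[b]$ and decoder side-information $x_2^n[b]$, possible provided $R < I_Q(V;Y|X_2) - I_Q(V;X_0|X_2)$; and (iii) draws its actions $x_{1,t}$ from the conditional $P_{X_1|V X_0 X_2}$ prescribed by $Q$. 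Standard joint-typicality arguments then show that the block-averaged empirical distribution of $(X_{0,t},X_{1,t},X_{2,t})$ approaches $\overline{Q}$ whenever \eqref{eq:MI_cond} holds strictly.

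For the converse (``only if''), assume a pair of strategies induces an averaged distribution $\epsilon$-close to $\overline{Q}$ in the sense of \eqref{eq:implementable}. I would introduce the Gelfand-Pinsker-style auxiliary
\begin{equation*}
V_t \eqdef (Y^{t-1}, X_{0,t+1}^T),
\end{equation*}
together with a time-sharing variable $J$ uniform on $\{1,\ldots,T\}$ and independent of everything else, and set $X_0 \eqdef X_{0,J}$, $X_1 \eqdef X_{1,J}$, $X_2 \eqdef X_{2,J}$, $Y \eqdef Y_J$, $V \eqdef (V_J, J)$. Memorylessness of the channel gives the Markov chain $V \markov (X_0, X_1, X_2) \markov Y$. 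The converse chain then starts from
\begin{IEEEeqnarray*}{rCl}
T \cdot I_Q(X_0;X_2) &\leq& I(X_0^T;X_2^T) + T\delta(\epsilon) \;\leq\; I(X_0^T;Y^T) + T\delta(\epsilon),
\end{IEEEeqnarray*}
where the first step uses the i.i.d.\ property of $X_0^T$ together with continuity of mutual information under small perturbations of the joint distribution, and the second uses that $X_2^T$ is a deterministic function of $Y^{T-1}$. The right-hand side is single-letterized by expanding $I(X_0^T;Y^T)=\sum_tI(X_0^T;Y_t|Y^{t-1})$, applying the Csisz\'ar sum identity to swap $Y^{t-1}$ with $X_{0,t+1}^T$, and regrouping terms into $\sum_t\bigl[I(V_t;Y_t|X_{2,t})-I(V_t;X_{0,t}|X_{2,t})\bigr]$. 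Normalizing by $T$ and letting $T\to\infty$ and $\epsilon\to0$ yields \eqref{eq:MI_cond}. The cardinality bound $|\mathcal{V}|\leq|\mathcal{X}_0|\cdot|\mathcal{X}_1|\cdot|\mathcal{X}_2|$ then follows from a standard Carath\'eodory argument applied to the constraints that $V$ preserves $\overline{Q}(x_0,x_1,x_2)$ and the mutual-information difference in \eqref{eq:MI_cond}.

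The main obstacle is the converse's single-letterization step, and specifically ensuring that the conditioning on $X_{2,t}$ appears correctly on both sides of \eqref{eq:MI_cond}. The crucial observation---which distinguishes this setting from the pure Gelfand-Pinsker problem---is that $X_{2,t}=\tau_t(Y^{t-1})$ is a deterministic function of $V_t$, so that conditioning any mutual information involving $V_t$ on $X_{2,t}$ is cost-free; this is what couples the coordination rate on the left to the Gelfand-Pinsker capacity with side information on the right. A secondary subtlety is carefully propagating the $\delta(\epsilon)$ slack through the continuity-of-mutual-information arguments so that the single-letter bound survives the limit $\epsilon\to0$.
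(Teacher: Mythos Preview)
Your achievability sketch and your choice of auxiliary $V_t=(Y^{t-1},X_{0,t+1}^T)$ both match the paper. The cardinality argument is also standard and correct in outline.

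The converse, however, has a genuine gap at the single-letterization step. You claim that $I(X_0^T;Y^T)$ can be upper-bounded (after Csisz\'ar-sum manipulations) by $\sum_t\bigl[I(V_t;Y_t|X_{2,t})-I(V_t;X_{0,t}|X_{2,t})\bigr]$. This inequality is false in general. Consider the channel $Y_t=X_{0,t}$ (so $\Gamma$ ignores both inputs) and let Agent~1 play a constant action. Then $I(X_0^T;Y^T)=T\,H(X_0)$, while $V_t=(X_0^{t-1},X_{0,t+1}^T)$ is independent of $X_{0,t}=Y_t$, so both $I(V_t;Y_t|X_{2,t})$ and $I(V_t;X_{0,t}|X_{2,t})$ vanish and the Gel'fand--Pinsker sum is~$0$. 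Your chain would give $T\,H(X_0)\le 0$. The point is that $I(X_0^T;Y^T)$ measures \emph{state leakage} through the channel, which can be arbitrarily larger than the Gel'fand--Pinsker rate; passing through $I(X_0^T;Y^T)$ as an intermediate upper bound on $T\,I_Q(X_0;X_2)$ is therefore too loose.

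The paper's converse avoids this by expanding $\frac1T I(X_0^T;Y^T)$ in \emph{two} different ways. One expansion yields exactly
\[
I_Q(X_0;Y,V,X_2)+\Delta,
\]
and a second expansion (using the Csisz\'ar--Kramer telescoping identity and the fact that $X_{2,t}=\tau_t(Y^{t-1})$) yields the upper bound
\[
I_Q(X_0,V;Y|X_2)+\Delta,
\]
where $\Delta=I(X_{0,Z};Y_{Z+1}^T\mid Y_Z,V_Z,Z)$ is the \emph{same} residual in both lines. Equating the two expressions for $\frac1T I(X_0^T;Y^T)$ cancels $\Delta$ and gives $I_Q(X_0;Y,V,X_2)\le I_Q(X_0,V;Y|X_2)$, which rearranges to~\eqref{eq:MI_cond}. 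The key idea you are missing is this matched-residual trick: one never bounds $I(X_0^T;Y^T)$ by the Gel'fand--Pinsker expression, one only compares two decompositions of it.
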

\begin{proof}
See Section~\ref{sec:proof_noncausal}.
\end{proof}

\begin{theorem}[Causal coding]\label{thm:causal} Let $\mathrm{c}=\mathrm{C}$. The set of implementable distributions is given by the set of distributions under the form 
\begin{equation}{Q}(x_0,x_1,x_2) = \mathrm{P}_{X_1|X_0X_2}(x_1|x_0,x_2) \mathrm{P}_{X_2}(x_2) \rho_0(x_0).\end{equation}
\end{theorem}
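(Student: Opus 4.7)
The plan is to prove both directions after noting that the prescribed factorization is equivalent to the single condition $X_0\perp X_2$ under $\ol Q$, since the $X_0$-marginal is fixed to $\rho_0$. The converse therefore reduces to showing that causal strategies cannot correlate the current state with Agent~$2$'s current action, and the achievability to exhibiting open-loop strategies that realize any such product-form target.

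\emph{Converse.} For each stage $t$ I would verify $\mathrm{P}_{X_{0,t}X_{2,t}}(x_0,x_2)=\rho_0(x_0)\mathrm{P}_{X_{2,t}}(x_2)$. Since $X_{2,t}=\tau_t(Y^{t-1})$, it suffices to prove $X_{0,t}\perp Y^{t-1}$. An easy induction on $s\le t-1$, using $X_{1,s}=\sigma_s^{\mathrm{C}}(X_0^s)$, $X_{2,s}=\tau_s(Y^{s-1})$, and the memoryless channel~\eqref{eq:DMC}, shows that $Y^{t-1}$ is a function of $X_0^{t-1}$ together with the channel's independent internal randomness up to time $t-1$. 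Because the state sequence is i.i.d.\ and the channel is memoryless, $X_{0,t}$ is independent of both, hence of $Y^{t-1}$, hence of $X_{2,t}$. Averaging over $t$ preserves the product structure of the $(X_0,X_2)$-marginal (as $\rho_0$ is shared across stages), so every implementable $\ol Q$ must satisfy $\ol Q(x_0,x_2)=\rho_0(x_0)\,\ol Q(x_2)$ up to the slack $\epsilon$ in Definition~\ref{def:implementability}, which is exactly the claimed form.

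\emph{Achievability.} For a target $\ol Q$ of the stated form I would exhibit strategies with $\tau_t$ constant, so that no channel feedback is ever used. Partition $\{1,\dots,T\}$ into blocks $\set{B}_{x_2}$ of length $\approx T\,\mathrm{P}_{X_2}(x_2)$ and let $\tau_t\equiv x_2$ on $\set{B}_{x_2}$. Within each $\set{B}_{x_2}$ expand the target kernel $\mathrm{P}_{X_1|X_0X_2}(\cdot\,|\cdot,x_2)$ as a convex combination of the finitely many point-mass kernels $\delta_{g(\cdot)}$ indexed by deterministic maps $g:\set{X}_0\to\set{X}_1$, subdivide $\set{B}_{x_2}$ into sub-blocks in exactly these proportions, and set $\sigma_t^{\mathrm{C}}(X_0^t):=g(X_{0,t})$ on the $g$-sub-block. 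The law of large numbers applied to the i.i.d.\ state sequence then forces the empirical joint type of $\{(X_{0,t},X_{1,t},X_{2,t})\}_{t=1}^T$ to converge in probability to $\ol Q$, yielding implementability per Definition~\ref{def:implementability}.

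\emph{Main obstacle.} The only substantive step is the converse independence claim; once the inductive bookkeeping that certifies $Y^{t-1}$ is measurable with respect to $(X_0^{t-1},\text{channel randomness})$ is in place, averaging in the converse and the convex-combination plus law-of-large-numbers argument in the achievability are routine. The punchline, in sharp contrast with the non-causal Theorem~\ref{thm:non-causal}, is that under strict causality the current state $X_{0,t}$ cannot influence $X_{2,t}$ no matter what the channel does, which precisely yields $X_0\perp X_2$ and explains the absence of any auxiliary variable or rate condition in the characterization.
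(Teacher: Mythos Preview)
Your converse is essentially the paper's: both argue that causality of encoding and decoding forces $X_{0,t}\perp X_{2,t}$ for every $t$ (the paper phrases this via a time-sharing variable $Z$ and the Markov chain $X_{2,Z}\markov Z\markov X_{0,Z}$, you via the measurability of $Y^{t-1}$ with respect to $(X_0^{t-1},\text{channel noise})$), and then average. One point you leave implicit that the paper states: from ``$\ol Q$ is $\epsilon$-close to a product-form law for every $\epsilon>0$'' you need a continuity/closure step to conclude that $\ol Q$ itself factorizes exactly; this is immediate since the set of product-form laws is closed, but it deserves a sentence.

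Your achievability takes a genuinely different route. The paper lets Agent~$2$ play a fixed $P_{X_2}$-typical sequence and has Agent~$1$ apply the \emph{randomized} kernel $P_{X_1|X_0X_2}(\cdot\,|x_{0,t},x_{2,t})$ symbol by symbol, then invokes the conditional typicality lemma. You instead realize everything deterministically: block-partition time according to $P_{X_2}$, then within each block decompose the target kernel as a convex combination of deterministic maps $g\colon\set X_0\to\set X_1$ and allocate sub-blocks in those proportions. This is a nice touch because the strategies in~\eqref{eq:strategies-II} are stated as deterministic functions, so your construction matches the model without appeal to external randomness. Note, however, that for your scheme the law of large numbers is not actually needed: the quantity in Definition~\ref{def:implementability} is the time-average of the \emph{true} per-stage laws $P_{X_{0,t}X_{1,t}X_{2,t}}$, not the empirical type, and with your deterministic allocation this average equals $\ol Q$ up to an $O(1/T)$ rounding error purely by construction. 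Your invocation of the LLN for the empirical type is harmless but superfluous.
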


\begin{proof}
See Section~\ref{sec:proof_causal}.
\end{proof}

Note that no information constraint appears in this second theorem. This is related to the fact that in the case of causal coding  no benefit can be obtained by communicating over the channel:  Agent~$2$ can simply ignore the channel outputs. 
In particular, when the two agents are  interested in maximizing a common payoff function $w(x_0, x_1,x_2)$ possible strategies are as follows:  Agent $2$ chooses a constant action $x_{2,t}=x_{2}$, and for each stage~$t$  Agent~$1$ picks an action $x_{1,t}^\star\in\mathcal{X}_1$ in function of this $x_2$ and of the nature state $x_{0,t}$ so as to maximize the payoff function on the current stage: 
\begin{equation}
x_{1,t}^\star{\in}  \arg \max_{\tilde{x}_1\in\mathcal{X}_1} w(x_{0,t}, \tilde{x}_{1}, x_2).
\end{equation} 
This strategy is referred to as a semi-coordinated policy in the context of coded power control introduced in \cite{Larrousse-isit2013}. 


\section{Application: Power Control}\label{sec:numerical-example}

We exemplify the above two theorems at hand of a power control problem. In particular, we wish to illustrate the loss in performance incurred when the coding is only causal instead of non-causal. An interference channel with two transmitter-receiver pairs is considered. Transmissions are assumed to be time-slotted and synchronized. For $k\in\{1,2\}$ and ``$j=-k$'' ($-k$ stands for the transmitter other than $k$), the signal-to-interference plus noise ratio (SINR) at Receiver $k$ at a given stage writes as $\mathrm{SINR}_k= \frac{g_{kk} x_k }{\sigma^2 + g_{jk} x_{j}}$
where $x_k \in \mathcal{X}_k = \left\{0, P_{\max}\right\}$ is the power level chosen by Agent or Transmitter $k$, $g_{kj}$ represents the channel gain of link $kj$, and $\sigma^2$ the noise variance. We assume that: $g_{kj} \in \{g_{\min}, g_{\max}\}$ is Bernouilli distributed $g_{kj} \sim \mc{B}(p_{kj})$ with $\mathrm{P}(g_{kj} = g_{\min}) = p_{kj}$; the global channel state is thus given by $x_0=(g_{11},g_{12},g_{21},g_{22})$. We define $\text{SNR(dB)} = 10\log_{10}\frac{P_{\max}}{\sigma^2}$ and set $g_{\min} = 0.1$, $g_{\max}=2$, $\sigma^2=1$, and $(p_{11},p_{12},p_{21},p_{22}) = (0.5,0.1,0.1,0.5)$. The considered common payoff function is $w(x_0,x_1,x_2) = \sum_{k=1}^2 \log_2(1+\mathrm{SINR}_k)$ and the signal $Y$ observed by Agent/Transmitter $2$ is assumed to be the output of a binary symmetric channel with transition probability $e=0.05$. 

Fig.~\ref{Fig:pc-causal} represents the maximum expected sum-rate against SNR in dB for {our two scenarios with causal and non-causal coding at Transmitter/Agent~$1$. (For practical reasons we restrict to $|\set{V}|=10$.) These two scenarios are compared to a scenario with \emph{costless communication}  {where Agent/Transmitter~$2$ observes $x_{1}^T$ non-causally} and thus the maximum of $w$ can be reached at any stage, and to a scenario where the two agents don't coordinate but simply transmit at full power throughout.}

\begin{figure}[htbp]
  \includegraphics[width=0.53\textwidth]{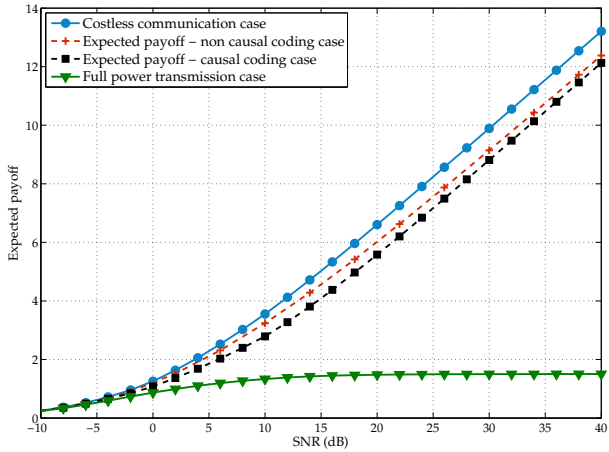}
\caption{Expected payoff against SNR(dB). One message from the figure is that good coordinated power control policies may perform quite close to the maximum sum-payoff for certain standard payoff functions; here the chosen sum-payoff is the sum-rate. Another message is that, for certain standard payoff functions designing non-causal power control policies may not bring very significant performance gains over causal power control policies.}
\label{Fig:pc-causal}
\end{figure}


\section{Related Results on State-Communication}
\label{sec:state-amplification}

Consider again the setup of Sections~\ref{sec:introduction} and \ref{sec:pb-formulation-main-result}, {but---in line of previous works on state-communication \cite{ChoudhuriKimMitra-2013,Sutivong2005}---specialize the channel in \eqref{eq:DMC} to a} state-dependent discrete memoryless channel (DMC) with state $x_{0,t}$ and single input $x_{1,t}$:\footnote{Traditionally, in state-communication the channel inputs are denoted $x_1,\ldots, x_t$, the state symbols $s_1,\ldots, s_T$ and the reconstructed symbols at Agent~$2$ $\hat{s}_1, \ldots, \hat{s}_T$. For coherence, here we keep the notations introduced in the first part of the paper.}\footnote{{Our results readily extend also to the more general channel in \eqref{eq:DMC}.}}
\begin{equation}\
\Gamma(y_t|x_{0,t}, x_{1,t}, x_{2,t}) = \Gamma(y_t|x_{0,t}, x_{1,t}). 
\label{eq:DMC2}
\end{equation}
{Initially,} we consider non-causal coding functions and causal or strictly causal decoding functions: 
\begin{align}\label{eq:strategies-2}
\text{case NC-enc/C-dec:}\quad&\left\{
\begin{array}{ccccc}
\sigma_t^{\mathrm{NC}} & : & \calX_0^T & \rightarrow & \calX_1\\
\tau_t^{\mathrm{C}}& : &  \mathcal{Y}^{t} & \rightarrow & \calX_2
\end{array}
 \right.\\\label{eq:strategies-II-bis}
\text{case NC-enc/SC-dec:}\quad&\left\{
\begin{array}{ccccc}
\sigma_t^{\mathrm{NC}} & : & \calX_0^T & \rightarrow & \calX_1\\
\tau_t^{\mathrm{SC}} & : & \mathcal{Y}^{t-1} & \rightarrow & \calX_2
\end{array}
 \right.
\end{align}
where $\mathrm{C}$ (resp. $\mathrm{SC}$) stands for causal (resp. strictly causal) decoding.

The goal here is that Agent~$2$ can produce a reconstruction sequence $x_2^T$ that matches the state sequence $x_{0}^T$ up to an allowed distortion.
In particular, for $d\in\{\mathrm{C}, \mathrm{SC}\}$, Distortion $D\geq 0$ is said achievable under a given bounded \emph{single-letter distortion function} 
\begin{equation}
\delta \colon \set{X}_0 \times \set{X}_2 \to {[0, d_\text{max} ],}
\end{equation}
 if for every $\epsilon>0$ and sufficiently large blocklengths $T$ it is possible to find encoding and decoding functions $\{\sigma_t^{\mathrm{NC}}\}_{t=1}^T$ and $\{\tau_t^{d}\}_{t=1}^T$ such that Agent~$2$'s reconstructed sequence {$X_{2}^T$} satisfies
\begin{equation}
{ \mathbb{E}\left[ \frac{1}{T} \sum_{t=1}^T \delta(X_{0,t}, X_{2,t})\right] \leq D+ \epsilon.}
\end{equation}

\begin{theorem}[Non-causal coding]\label{thm1}  
Let $d=\mathrm{C}$, i.e., decoding is causal.  Distortion $D$ is achievable if and only if
{
\begin{equation}\label{eq:distor_causal}
\mathbb{E} [\delta(X_0, g(U,Y))] \leq D,
\end{equation}
for some function  $g\colon \set{U}\times \set{Y} \to \set{X}_2$ and some joint law $Q_{X_0X_1YUV}(x_0,x_1,y,u,v)$ that factorizes as 
\begin{IEEEeqnarray}{rCl}\label{eq:cons11}
\rho_0(x_0)  P_{UVX_1|X_0}(u,v,x_1|x_0)  \Gamma(y|x_0,x_1); \quad
\end{IEEEeqnarray} 
and satisfies
\begin{equation}\label{eq:cons21}
I_Q(U;X_0) \leq I_Q(V;Y|U)- I_Q(V;X_0|U).
\end{equation}}
%
%
%
The input $X_1$ can be restricted to be a function of $(U,V, X_0)$.
\end{theorem}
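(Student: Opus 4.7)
My plan is to re-use the block-Markov structure behind Theorem~\ref{thm:non-causal}, reinterpreting the common coordination variable as an auxiliary $U$ that Agent~$2$ recovers ahead of time and then combines with the current channel output through $g$. The converse will follow by picking natural Gel'fand--Pinsker-style auxiliaries and then collapsing the resulting expression via the Csisz\'ar sum identity.

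\textbf{Achievability.} I would split the $BT$ stages into $B$ sub-blocks of length $T$. For each sub-block~$b$, generate a codebook of $2^{TR_U}$ i.i.d.\ $P_U$-sequences. Using the non-causally known state, Agent~$1$ pre-selects for every $b$ an index $\ell_b$ whose codeword $u^T(\ell_b)$ is jointly typical with $X_{0,b}^T$; this succeeds by the covering lemma as long as $R_U>I_Q(U;X_0)$. Transmission then proceeds block-Markov: during sub-block~$b$, Agent~$1$ and Agent~$2$ share $u^T(\ell_b)$ (Agent~$2$ having decoded it at the end of sub-block~$b-1$), and Agent~$1$ uses a \emph{conditional} Gel'fand--Pinsker code with auxiliary $V\sim P_{V|UX_0}$ and channel input $X_1\sim P_{X_1|UVX_0}$ to convey the next index $\ell_{b+1}$; standard joint-typicality analysis gives reliable decoding once $R_U<I_Q(V;Y|U)-I_Q(V;X_0|U)$. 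In sub-block~$b+1$, Agent~$2$ then sets $X_{2,t}=g(U_{b+1,t},Y_{b+1,t})$, which is causal because $u^T(\ell_{b+1})$ is fully available at the start of sub-block~$b+1$. The first sub-block contributes at most $d_{\max}$ per letter, the remaining ones achieve expected per-letter distortion arbitrarily close to $\E{\delta(X_0,g(U,Y))}$, so letting $T\to\infty$ then $B\to\infty$ yields the claim. A standard ``absorb the residual randomness into $V$'' step justifies the reduction $X_1=f(U,V,X_0)$.

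\textbf{Converse.} Given a scheme achieving $D+\epsilon$, I introduce a uniform time-sharing variable $J\in\{1,\ldots,T\}$ and set $X_0\eqdef X_{0,J}$, $X_1\eqdef X_{1,J}$, $Y\eqdef Y_J$, together with
\[
U\eqdef(Y^{J-1},J),\qquad V\eqdef(X_{0,J+1}^T,Y^{J-1},J).
\]
Because $X_{2,i}=\tau_i^{\mathrm{C}}(Y^i)$ is a deterministic function of $(U_i,Y_i)$, I obtain $X_2=g(U,Y)$ for some $g$, so $\E{\delta(X_0,g(U,Y))}\le D+\epsilon$. Memorylessness of $\Gamma$ (with no $X_2$-dependence) yields the Markov chain $Y-(X_0,X_1)-(U,V)$ and hence the factorization~\eqref{eq:cons11}. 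For~\eqref{eq:cons21} I expand
\[
I_Q(V;Y|U)-I_Q(V;X_0|U)=\frac{1}{T}\sum_{i=1}^{T}\bigl[I(X_{0,i+1}^T;Y_i|Y^{i-1})-I(X_{0,i+1}^T;X_{0,i}|Y^{i-1})\bigr],
\]
apply the Csisz\'ar sum identity to the first term, and use the i.i.d.\ property $I(X_{0,i+1}^T;X_{0,i})=0$; the two contributions collapse to $\frac{1}{T}\sum_i I(Y^{i-1};X_{0,i})=I_Q(U;X_0)$, so~\eqref{eq:cons21} holds, in fact with equality. A standard Fenchel--Eggleston--Carath\'eodory step preserves $\rho_0$, the distortion, and $I_Q(V;Y|U)-I_Q(V;X_0|U)$ while enforcing the cardinality bound.

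\textbf{Main obstacle.} The delicate point will be reconciling the non-causal encoder with the causal decoder \emph{inside} the block-Markov scheme: I must verify that although Agent~$1$ uses the entire state sequence to pre-select all the $U^T(\ell_b)$ at once, Agent~$2$ never touches a future channel output when forming $X_{2,t}$. The one-block look-ahead---the $U$-index transmitted in sub-block~$b$ is exactly the one consumed in sub-block~$b+1$---together with a negligible-distortion bootstrap in sub-block~$1$ is what buys causality at the decoder; once this is set up, the remainder is a clean adaptation of the Gel'fand--Pinsker-with-common-information machinery already invoked for Theorem~\ref{thm:non-causal}.
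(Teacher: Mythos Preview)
Your proposal is correct and follows essentially the same approach as the paper. Both the block-Markov achievability (covering for $U$, Gel'fand--Pinsker on $V$ to carry the next block's $U$-index, causal symbol-wise reconstruction $g(U,Y)$) and the converse auxiliaries $U=(Y^{J-1},J)$ and $V\supseteq(X_{0,J+1}^T,J)$ coincide with the paper's; the only cosmetic difference is that you derive~\eqref{eq:cons21} by a direct Csisz\'ar-sum computation, whereas the paper equates two expansions of $\tfrac{1}{T}I(X_0^T;Y^T)$ sharing a common residual term---both routes in fact yield equality.
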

\begin{proof}
Omitted for brevity.
\end{proof}
\begin{remark}
Theorem~\ref{thm1} remains valid for $d=\mathrm{SC}$, i.e., when decoding is strictly causal, if \eqref{eq:distor_causal} is replaced by
\begin{equation}\label{eq:distor_sccausal}
\mathbb{E} [\delta(X_0, g(Y))] \leq D.
\end{equation}
In this case, one can restrict to the choice $U=g(Y)=X_2$.
\end{remark}
%

Assume now that encoding is causal. The setup is as described above, except that 
the encoding functions  in \eqref{eq:strategies-2} have to be replaced by functions of the form
\begin{align}\label{eq:strategies-4}
\sigma_t^{\mathrm{C}}  :  \calX_0^t  \rightarrow  \calX_1.
 \end{align}

\begin{theorem}[Causal coding]\label{thm5}Let $d=\mathrm{C}$, i.e., decoding is causal. 
Distortion $D$ is achievable if and only if
\begin{equation}
\mathbb{E} [ \delta(X_0, g(Y))] \leq D,
\end{equation}
{for some function $g\colon \set{Y}\to \set{X}_2$
and a joint distribution $P_{X_0X_1Y}(x_0,x_1,y)$ that factorizes as }
\begin{IEEEeqnarray}{rCl}\label{eq:jointlaw}
\rho_0(x_0)  P_{X_1|X_0}(x_1|x_0)P_{Y|X_0X_1}(y|x_0,x_1). \quad
\end{IEEEeqnarray} 

 Let $d=\mathrm{SC}$, i.e., decoding is strictly causal. 
Distortion $D$ is achievable if and only if there exists a  constant value $x_2\in \set{X}_2$ so that 
\begin{equation}
\mathbb{E} [ \delta(X_0,x_2)] \leq D. 
\end{equation}
\end{theorem}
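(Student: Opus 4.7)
The proof naturally splits into two parts according to whether decoding is causal or strictly causal; in both cases achievability follows from an elementary symbol-by-symbol scheme. For the causal-decoding case, I would let the encoder sample $X_{1,t}\sim P_{X_1|X_0}(\cdot|X_{0,t})$ independently across stages and let the decoder compute $X_{2,t}=g(Y_t)$. The resulting i.i.d.\ tuples $(X_{0,t},X_{1,t},Y_t,X_{2,t})$ have precisely the single-letter joint in \eqref{eq:jointlaw}, yielding per-stage expected distortion $\mathbb{E}[\delta(X_0,g(Y))]\leq D$. For the strictly-causal-decoding case, Agent~$2$ simply outputs the constant $x_2$ at every stage, achieving distortion $\mathbb{E}[\delta(X_0,x_2)]\leq D$.

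The converse for strictly causal decoding is immediate. Because the encoder is causal and the channel memoryless, $Y^{t-1}$ is a function of $(X_0^{t-1},X_1^{t-1})$ and past channel noises only, with $X_1^{t-1}$ itself determined by $X_0^{t-1}$. Since $X_{0,t}$ is i.i.d.\ and independent of $X_0^{t-1}$, this forces $Y^{t-1}\perp X_{0,t}$, and hence $X_{2,t}=\tau_t(Y^{t-1})$ is independent of $X_{0,t}$. Consequently $\mathbb{E}[\delta(X_{0,t},X_{2,t})]\geq \min_{x_2\in\set{X}_2}\mathbb{E}_{\rho_0}[\delta(X_0,x_2)]$, and this bound transfers to the time-average; any constant $x_2$ attaining this minimum witnesses achievability of $D$.

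The converse for causal decoding is the main obstacle, since a multi-letter decoder may exploit $Y^{t-1}$ to glean information about $X_{0,t}$ beyond what $Y_t$ alone carries---the Markov chain $X_{0,t}-Y_t-Y^{t-1}$ may genuinely fail---so a naive time-averaging of the joint distribution will not directly work. The plan is to circumvent this by \emph{conditioning on} the realization $Y^{t-1}=y^{t-1}$. For each such realization, the conditional joint of $(X_{0,t},X_{1,t},Y_t)$ still factorizes as $\rho_0(x_{0,t})\,P_{X_{1,t}|X_{0,t},Y^{t-1}=y^{t-1}}(x_{1,t}|x_{0,t})\,\Gamma(y_t|x_{0,t},x_{1,t})$---this uses $X_{0,t}\perp Y^{t-1}$, memorylessness of the channel, and causal encoding---while $X_{2,t}=\tau_t(y^{t-1},Y_t)$ becomes a deterministic function of $Y_t$ alone. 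Hence $\mathbb{E}[\delta(X_{0,t},X_{2,t})\mid Y^{t-1}=y^{t-1}]$ is a genuine single-letter distortion for a joint of the form~\eqref{eq:jointlaw}. Expanding the overall average distortion as the convex combination $\frac{1}{T}\sum_{t=1}^T\sum_{y^{t-1}}\Pr[Y^{t-1}=y^{t-1}]\cdot \mathbb{E}[\delta(X_{0,t},X_{2,t})\mid Y^{t-1}=y^{t-1}]\leq D+\epsilon$ reveals that at least one summand is $\leq D+\epsilon$, which yields a single-letter scheme achieving that distortion; compactness of the space of joint distributions on the finite alphabets then gives the claim upon taking $\epsilon\to 0$.
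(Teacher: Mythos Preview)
Your proof is correct. The paper omits the proof of this theorem for brevity, so there is no detailed argument to compare against directly; however, your approach is entirely consistent with the techniques the paper uses elsewhere. In particular, your strictly-causal converse is exactly the independence argument that appears in the converse of Theorem~\ref{thm:causal} (Section~\ref{sec:proof_causal}): causal encoding together with the memoryless channel make $Y^{t-1}$ a function of $X_0^{t-1}$ and past noise only, hence independent of $X_{0,t}$.

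For the causal-decoding converse, your conditioning on $Y^{t-1}=y^{t-1}$ is equivalent to the paper's time-sharing device with the identification $U=(Y^{Z-1},Z)$: in both cases one obtains, for each value of the conditioning variable, a genuine single-letter problem with state law $\rho_0$, some conditional input law, channel $\Gamma$, and a deterministic reconstruction map on $\set{Y}$. One small simplification you could make: rather than selecting a single summand below $D+\epsilon$ and invoking compactness, you can directly lower-bound every term of the convex combination by $D^\star\eqdef\min_{P_{X_1|X_0},\,g}\mathbb{E}[\delta(X_0,g(Y))]$ (the minimum exists by finiteness of the alphabets), which immediately gives $D^\star\leq D+\epsilon$ for all $\epsilon>0$ and hence $D^\star\leq D$; this avoids the limiting argument. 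Your motivating remark that $Y^{t-1}$ may carry information about $X_{0,t}$ ``beyond what $Y_t$ alone carries'' is also accurate: although $Y^{t-1}\perp X_{0,t}$ unconditionally, conditioning on $Y_t$ can induce dependence (since $Y_t$ depends on both $X_{0,t}$ and, through $X_{1,t}=\sigma_t^{\mathrm{C}}(X_0^t)$, on $X_0^{t-1}$), so a naive averaging of the unconditional joint would indeed not suffice.
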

\begin{proof}
Omitted for brevity.
\end{proof}
\begin{remark}
In combination with previous results on state-communication \cite{ChoudhuriKimMitra-2013,Sutivong2005}, our results provide the following insights. When the \emph{decoder} is non-causal,  Wyner-Ziv coding has to be used  to compress the state. This is not possible anymore when the decoder is only causal, where  standard compression suffices.  When the \emph{encoder} is non-causal, then Gel'fand-Pinsker coding should be used to communicate over the channel. When the encoder is only causal, this is not possible anymore and the less powerful Shannon strategies suffice. When the encoding and the decoding are causal or strictly causal, then no coding is needed anymore; simple symbol-by symbol strategies at the transmitter (Agent~$1$) and the receiver (Agent~$2$) are sufficient.
\end{remark}

\section{Proof of Theorem~\ref{thm:non-causal}}\label{sec:proof_noncausal}
The proof of Theorem \ref{thm:non-causal} can be divided into three parts: the direct part, which is established in \cite{larrousse-tit-2015-sub} and omitted for brevity;  the bound on the cardinality of the auxiliary alphabet $|\mathcal{V}|$which is also omitted; and the converse, which shows  optimality of the coding scheme in \cite{larrousse-tit-2015-sub}
and is proved in the following. 

\textit{Converse.} 
Let $\overline{Q}$ be an implementable distribution, and fix an arbitrary $\epsilon>0$.

By definition,  there must exist a sufficiently large blocklength $T$ and strategies $\{\sigma_{t}^{\mathrm{NC}}\}_{t=1}^T$ and $\{\tau_{t}\}_{t=1}^T$ such that for  $t\in\{1,\ldots, T\}$, the tuple  $(X_{0,t}, X_{1,t}, X_{2,t})$ induced by these strategies has a joint law $ P_{X_{0,t}X_{1,t}X_{2,t}}$ that satisfies 
\begin{equation}\label{eq:close2}
\bigg| \frac{1}{T}\sum_{t=1}^T P_{X_{0,t}X_{1,t}X_{2,t}}(x_0, x_1,x_2) - \overline{Q}(x_0,x_1,x_2)\bigg|<\epsilon.
\end{equation}
 

For each $t$, let $Y_t$ denote the output of the channel for state $X_{0,t}$ and inputs $X_{1,t}$ and $X_{2,t}$. For each positive integer $m$, we use the shorthand notation  $A_{m}^{T}$ to denote the random tuple $(A_{m}, \ldots, A_T)$.  Let $Z$ be a random variable that is uniformly distributed over $\{1,\ldots, T\}$ independent of $X_0^T, X_1^T, X_2^T,Y^T$, and define for each $t$, 
\begin{IEEEeqnarray}{rCl}\label{eq:Vt}
V_t & \eqdef & (X_{0,t+1}^T, Y^{t-1}).
\end{IEEEeqnarray}
Finally, let 
$V\eqdef (V_Z,Z)$, $X_1\eqdef X_{1,Z}$, $Y\eqdef Y_Z$, $X_0\eqdef X_{0,Z}$, and $X_{2}\eqdef X_{2,Z}$, and 
denote the probability distribution of the tuple $(V,X_0,X_1,X_2, Y)$ by $Q_{VX_0X_1X_2Y}$. 
Notice that this law factorizes as
\begin{align}
\label{eq:channel_factor}\lefteqn{
  Q_{VX_0X_1X_2Y}(v,x_0,x_1,x_2,y) } \quad \nonumber \\
&  = \rho_0(x_0) Q_{VX_1X_2|X_0}(v,x_1,x_2|x_0)\Gamma(y|x_0,x_1,x_2),\end{align}
with a marginal law  satisfying
\begin{IEEEeqnarray}{rCl}\label{eq:average_over_time2}
\lefteqn{
 \sum_{v, y}Q_{VX_0X_1X_2Y}(v,x_0,x_1,x_2, y) }\qquad  \nonumber \\
  &= & \frac{1}{T}\sum_{t=1}^T P_{X_{0,t}X_{1,t}X_{2,t}}(x_0, x_1,x_2).
\end{IEEEeqnarray}
The Markov chain $Y - (X_0,X_1,X_2) - V$ in \eqref{eq:channel_factor} holds because by the memorylessness of the channel~\eqref{eq:DMC}, $Y_t - (X_{0,t}, X_{1,t},  X_{2,t}) - (X_0^T, X_1^T,  X_2^T, Y^{t-1}, Y_{t+1}^T)$ forms a Markov chain for any $t\in\{1,\ldots, T\}$, and because the time-sharing random-variable $Z$ is independent of $(X_0^T, X_1^T, X_2^T, Y^T)$.

\allowdisplaybreaks[4]
We continue with the following sequence of equalities:
\begin{IEEEeqnarray}{rCl}
\lefteqn{\frac{1}{T} I(X_0^T;Y^T)}\nonumber \\
 & \stackrel{(a)}{=}& \frac{1}{T} \sum_{t=1}^T I(X_{0,t};Y^T|X_{0,t+1}^T) \nonumber\\
& \stackrel{(b)}{=} &  \frac{1}{T} \sum_{t=1}^T I(X_{0,t};Y^T, X_{0,t+1}^T) \nonumber\\
& \stackrel{(c)}{=}& \frac{1}{T} \sum_{t=1}^T  \big[ I(X_{0,t};Y^t, X_{0,t+1}^T) + I(X_{0,t};Y_{t+1}^T| Y^t, X_{0,t+1}^T)\big]\nonumber\\
& \stackrel{(d)}{=}& \frac{1}{T} \sum_{t=1}^T  \big[ I(X_{0,t};Y^t, X_{0,t+1}^T, X_{2,t}) \nonumber \\
& & \qquad \qquad + I(X_{0,t};Y_{t+1}^T| Y^t, X_{0,t+1}^T)\big]\nonumber\\
&\stackrel{(e)}{=} & \frac{1}{T} \sum_{t=1}^T  \big[  I(X_{0,t};Y_t,  V_t, X_{2,t}) + 
I(X_{0,t};Y_{t+1}^T| Y_t, V_t)\big],\label{eq:up1}\nonumber \\
& = & I(X_{0,Z}, V_Z; Y_Z|X_{2,Z}, Z) +  I(X_{0,Z};Y_{Z+1}^T| Y_Z,  V_Z, Z) \nonumber \\
& = & I(X_0, V;Y|X_{2}) + I(X_{0,Z};Y_{Z+1}^T| Y_Z,  V_Z, Z), \label{eq:up2}
\end{IEEEeqnarray}
where $(a)$ follows from the chain rule of mutual information; $(b)$ by the i.i.d-ness of the state sequence $(X_{0,1},\dots,X_{0,T})$; $(c)$ by the chain rule of mutual information; $(d)$ because $X_{2,t}$ is computed as a function of $Y^{t-1}$; $(e)$ by \eqref{eq:Vt}; {and the last two equalities by the definitions of $(Z, V_{Z}, X_{0,Z},X_{2,Z}, Y_Z)$ and $(V, X_0,X_2, Y)$ and the independence of $Z$ and $X_{0,Z}$.}

On the other hand, 
\begin{IEEEeqnarray}{rCl}
\lefteqn{\frac{1}{T} I(X_0^T;Y^T)}\nonumber \\
& \stackrel{(f)}{=} & \frac{1}{T} \sum_{t=1}^T \Big[ I(X_{0,t};Y^T|X_{0,t+1}^T)  + I(Y^t;X_{0,t+1}^T) \nonumber \\
&& \phantom{=========} - I(Y^{t-1};X_{0,t}^T) \Big]\nonumber \\
& \stackrel{(g)}{=} & \frac{1}{T} \sum_{t=1}^T \Big[ I(X_{0,t}^T;Y^t)  + I(X_{0,t}; Y_{t+1}^T|Y^{t}X_{0,t+1}^T) \nonumber \\
&& \phantom{=========}  - I(Y^{t-1};X_{0,t}^T) \Big]\nonumber \\
& \stackrel{(h)}{=} & \frac{1}{T} \sum_{t=1}^T \Big[ I(X_{0,t}^T;Y_t |Y^{t-1})  + I(X_{0,t}; Y_{t+1}^T|Y^{t}X_{0,t+1}^T)  \Big]\nonumber \\
& \stackrel{(i)}{=}  & \frac{1}{T} \sum_{t=1}^T \Big[ I(X_{0,t}^T;Y_t |Y^{t-1}, X_{2,t})  + I(X_{0,t}; Y_{t+1}^T|Y^{t}X_{0,t+1}^T)  \Big]\nonumber \\
& \stackrel{(j)}{\leq} &  \frac{1}{T} \sum_{t=1}^T \Big[ I(X_{0,t}^T , Y^{t-1};Y_t |X_{2,t})  + I(X_{0,t}; Y_{t+1}^T|Y^{t} X_{0,t+1}^T)  \Big]\nonumber \\
& \stackrel{(k)}{=} & \frac{1}{T} \sum_{t=1}^T \Big[ I(X_{0,t}, V_t; Y_t|X_{2,t}) +  I(X_{0,t};Y_{t+1}^T| Y_t,  V_t)  \Big] \nonumber \label{eq:up3}\nonumber \\
& \stackrel{(\ell)}{=}& I(X_{0,Z}, V_Z; Y_Z|X_{2,Z}, Z) +  I(X_{0,Z};Y_{Z+1}^T| Y_Z,  V_Z, Z) \nonumber \\
&\stackrel{(m)}{\leq} & I(X_0, V;Y|X_{2}) + I(X_{0,Z};Y_{Z+1}^T| Y_Z,  V_Z, Z), \label{eq:up4}
\end{IEEEeqnarray}
where $(f)$ follows from the chain rule of mutual information and from the {Csisz\'{a}r-Kramer} telescoping identity \cite{kramer-itsl-2011}; $(g)$ and $(h)$ {follows} by the chain rule of mutual information;  $(i)$ because $X_{2,t}$ is computed as a function of $Y^{t-1}$; $(j)$ follows because conditioning cannot increase entropy;  $(k)$  by \eqref{eq:Vt};  {$(\ell)$ by the definitions of $(Z, V_{Z}, X_{0,Z},X_{2,Z}, Y_Z)$; and $(m)$ by the definitions of  $(V, X_0,X_2, Y)$ and the independence of $Z$ and $X_{0,Z}$.}
Combining \eqref{eq:up2} and \eqref{eq:up4}, we obtain
\begin{IEEEeqnarray}{rCl}
I(X_0;Y, V, X_2) \leq  I(X_0, V;Y|X_2),
\end{IEEEeqnarray}
which by  chain rule of mutual information is equivalent to 
\begin{IEEEeqnarray}{rCl}\label{eq:const}
I(X_0;X_2) &\leq& I(X_0, V;Y|X_2)- I(X_0; Y,V|X_2)\nonumber \\
&= & I(V;Y|X_2) - I(V;X_0|X_2). 
\end{IEEEeqnarray}

By   \eqref{eq:channel_factor}, \eqref{eq:average_over_time2}, and \eqref{eq:const}, we conclude that   the joint law $ \frac{1}{T}\sum_{t=1}^T P_{X_{0,t}X_{1,t}X_{2,t}}(x_0, x_1,x_2)$ satisfies the conditions on implementable distributions that we stated in the theorem. In view of~\eqref{eq:close2}, since $\epsilon>0$ can be arbitrary small, and by continuity of mutual information, then also the law $\ol{Q}(x_0, x_1, x_2)$ must satisfy these conditions.

%

\section{Proof of Theorem~\ref{thm:causal}}\label{sec:proof_causal}

\textit{Converse.} Let $\overline{Q}$ be an implementable distribution. Fix $\epsilon>0$ and sufficiently large $T$. By definition, there must exist strategies such that for each $t\in\{1,\ldots, T\}$, the triple  $(X_{0,t}, X_{1,t}, X_{2,t})$ has a joint law $ P_{X_{0,t}X_{1,t}X_{2,t}}$ that satisfies 
\begin{equation}\label{eq:close}
\bigg| \frac{1}{T}\sum_{t=1}^T P_{X_{0,t}X_{1,t}X_{2,t}}(x_0, x_1,x_2) - \overline{Q}(x_0,x_1,x_2)\bigg|<\epsilon.
\end{equation}

Let $Z$ be a random variable that is uniformly distributed over $\{1,\ldots, T\}$ and independent of $X_0^T,  X_1^T, X_2^T,Y^T$. 
Further, define  $X_0\eqdef X_{0,Z}$, $X_1\eqdef X_{1,Z}$, $X_{2}\eqdef X_{2,Z}$,  $Y\eqdef Y_Z$.

Denoting the probability mass function of the triple $(X_0,X_1,X_2)$ by $Q_{X_0X_1X_2}$, by the definitions above,
\begin{equation}\label{eq:average_over_time}
 Q_{X_0,X_1,X_2}(x_0,x_1,x_2) =  \frac{1}{T}\sum_{t=1}^T P_{X_{0,t}X_{1,t}X_{2,t}}(x_0, x_1,x_2).
\end{equation}
We will prove that the law $Q_{X_0X_1X_2}$ factorizes as 
\begin{IEEEeqnarray}{rCl}\label{eq:factors}
\rho_0(x_0) Q_{X_2}(x_2) Q_{X_1|X_2,X_0}(x_1|x_2,x_0).
\end{IEEEeqnarray}
By \eqref{eq:average_over_time}, by continuity, and by \eqref{eq:close}, this will imply that also $\ol{Q}$ factorizes in this way, and thus conclude the proof.  

To prove \eqref{eq:factors}, we first notice that for any $t\in\{1,\ldots, T\}$, by the causality of the decoding, $X_{2,t}$ depends only on $Y^{t-1}$. By the causality of the encoding this latter is independent of $X_{0,t}$.  Thus, $X_{2,Z}\markov Z\markov X_{0,Z}$ forms  a Markov chain. Since $Z$ and $X_{0,Z}=X_0$ are independent, $X_2=X_{2,Z}$ also needs to be independent of $X_0$.
These observations combine to establish that the joint law of $(X_0, X_1, X_2)$ has to factorize as in \eqref{eq:factors}.

\textit{Achievability:} Consider a joint distribution $\overline{Q}(x_0, x_1, x_2)$ that factorizes as $\overline{Q}(x_0,x_1,x_2) =  \rho_0(x_0)\mathrm{P}_{X_2}(x_2) \times \mathrm{P}_{X_1|X_0X_2}(x_1|x_0,x_2)$.
Fix  small $ \epsilon_2>\epsilon_1 >0$ and an arbitrary blocklength $T$. Then, pick a $T$-length {sequence} $x_{2,1}, \ldots, x_{2,T}$ that lies in the typical set $\set{T}_{\epsilon_1}^{(T)}(P_{X_2})$; see  \cite[p.~25]{elgamalkim-2011} for a definition of this typical set.

The two agents produce the following actions. At stage $t\in\{1,\ldots, T\}$, Agent~2 produces $x_{2,t}$. 
Agent~1 produces the random action $X_{1,t}$ that it draws according to  the conditional law $P_{X_1|X_0X_2}( \cdot| x_{0,t}, x_{2,t})$.

We analyze the proposed strategies. 
Define the event:
\begin{align}
E^{(T)} \eqdef \Big\{\big(X_0^T,X_1^T,x_2^T\big)\notin \set{T}_{\epsilon_2}^{(T)}(\overline Q)\Big\}.
\end{align}
By the weak law of large numbers, and the conditional typicality lemma \cite[p.~27]{elgamalkim-2011},
\begin{equation}\label{eq:E0_limit}
    \lim_{T\rightarrow\infty}\mathrm{P}\big(E^{(T)}\big)=0. 
\end{equation}
Since $\epsilon_2>0$ can be chosen arbitrarily small, by Proposition~5 in \cite{larrousse-tit-2015-sub}, this establishes the desired achievability result.

\appendices
\section{Proof of Cardinality Bound} \label{sec:cardinality_bound}

Let us prove that in Theorem \ref{thm:non-causal} it suffices to choose $V$ of cardinality 
\begin{equation}\label{eq:cardinality_bound}
|\mathcal{V}| \leq |\mathcal{X}_0|\cdot|\mathcal{X}_1|\cdot|\mathcal{X}_2|.
\end{equation}
Let $\mathcal{P}$ denote the set of pmfs over $\mathcal{X}_0 \times \mathcal{X}_1\times\mathcal{X}_2$. 
For each triple $(x_0, x_1, x_2) \in \mathcal{X}_0 \times \mathcal{X}_1\times \mathcal{X}_2$ except for one triple $(x_0^\star, x_1^\star, x_2^\star)$ that one can freely choose, define the following continuous real-valued functions:
\begin{equation}
g_{(x_0, x_1, x_2)}\colon p\in\mathcal{P} \mapsto p(x_0,x_1,x_2).
\end{equation}
Also define the continuous real-valued function $g_0$ as on top of the next page, see \eqref{eq:g0}.
\begin{figure*}[th!]
\begin{IEEEeqnarray}{rCl}\label{eq:g0}
\lefteqn{g_0\colon p\in\mathcal{P} \mapsto}\nonumber \\
& &- \sum_{(x_0', x_2')\in\mathcal{X}_0\times \mathcal{X}_2} \Bigg(\sum_{x_1'\in\mathcal{X}_1}p(x_0', x_1', x_2')\Bigg)\log \Bigg( \sum_{x_1'\in\mathcal{X}_1}p(x_0', x_1', x_2')\Bigg)\nonumber \\
&& +   \sum_{(x_2',y')\in\mathcal{X}_2\times \mathcal{Y}} \Bigg(\sum_{(x_0',x_1')\in\mathcal{X}_0\times \mathcal{X}_1} p(x_0', x_1', x_2')\Gamma(y'|x_0', x_1', x_2')\Bigg)\log \Bigg(\sum_{(x_0',x_1')\in\mathcal{X}_0\times\mathcal{X}_1} p(x_0', x_1', x_2')\Gamma(y'|x_0', x_1', x_2')\Bigg).
\end{IEEEeqnarray}
\hrule
\end{figure*}

Now, fix a $5-$uple $( V, X_0, X_1, X_2,Y)$ satisfying the conditions in the theorem, and where $V$ is allowed to be over any desired  alphabet $\mathcal{V}$ which can even be of unbounded cardinality. Let $Q_{X_0 X_1 X_2}$ denote the joint law of $(X_0, X_1, X_2)$ and $F_V(\cdot)$ the cumulative distribution function of $V$. For each $v\in\mathcal{V}$, let $p_{\mathbf{X}|V=v}(\cdot, \cdot, \cdot)\in \mathcal{P}$ denote the conditional law of the tuple $(X_0, X_1, X_2)$ given $V=v$. 

For any tuple $(x_0, x_1, x_2) \in \mathcal{X}_0 \times \mathcal{X}_1\times \mathcal{X}_2$ for which the function $g_{(x_0,x_1, x_2)}$ is defined, we have
\begin{equation}\label{eq:law}
\int_{\mathcal{V}} g_{(x_0,x_1, x_2)}(p_{\mathbf{X}|V=v}) \d F_V(v) = Q_{X_0X_1X_2}(x_0,x_1,x_2).
\end{equation}
Moreover, 
\begin{IEEEeqnarray}{rCl}\label{eq:gg}
\int_{\mathcal{V}} g_{0}(p_{\mathbf{X}|V=v})  \d  F_V(v) &=& H(X_0,X_2| V)-H(Y,X_2|V) \nonumber \\
& = &  H(X_0| V,X_2)-H(Y|V,X_2).\IEEEeqnarraynumspace
\end{IEEEeqnarray}

By the Support Lemma, \cite[Appendix~C]{elgamalkim-2011}, there exists a set $\tilde{\mathcal{V}}$ satisfying \eqref{eq:cardinality_bound}, a probability mass function ${Q}_{\tilde V}(\cdot)$ over  $\tilde{\mathcal{V}}$ , and $|\tilde{\mathcal{V}}|$ conditional probability distributions $\{ p_{v}\in\mathcal{P}\}_{v\in\tilde{\mathcal{V}}}$ such that for any tuple $(x_0, x_1, x_2) \in \mathcal{X}_0 \times \mathcal{X}_1\times \mathcal{X}_2$ for which the function $g_{(x_0,x_1, x_2)}$ is defined, 
\begin{align}\label{eq:keep_law}
&\int_{\mathcal{V}} g_{(x_0,x_1, x_2)}(p_{\mathbf{X}|V=v})  \d  F_V(v) \nonumber \\
&\phantom{==========} =  \sum_{v\in\tilde{\mathcal{V}}} g_{(x_0,x_1, x_2)}(p_v) {Q}_{\tilde V}(v)\IEEEeqnarraynumspace
\end{align} 
and 
\begin{equation}\label{eq:keepg0}
\int_{\mathcal{V}} g_{0}(p_{\mathbf{X}|V=v})  \d  F_V(v) = \sum_{v\in\tilde{\mathcal{V}}} g_0(p_v) {Q}_{\tilde V}(v). 
\end{equation}

Define the $5-$uple $(\widetilde V, \widetilde{X}_0, \widetilde X_1, \widetilde X_2, \widetilde Y)$ to be of law 
\begin{equation}\label{eq:deftilde}
\widetilde{Q}_V(v) \cdot p_v(x_0, x_1, x_2) \Gamma(y|x_0,x_1,x_2).\;
\end{equation}
By  definition~\eqref{eq:deftilde}, by \eqref{eq:law} and by \eqref{eq:keep_law},  the tuple $(\widetilde X_0, \widetilde X_1, \widetilde X_2, \widetilde Y)$ has the same law as the original tuple $( X_0,  X_1,  X_2,  Y)$:
\begin{IEEEeqnarray}{rCl}\label{eq:keep_joint_law}
\lefteqn{\Pr\big[{\widetilde X_0=x_0,\widetilde X_1=x_1,\widetilde X_2=x_2, \widetilde Y=y}\big] }\nonumber \\
& =& Q_{X_0, X_1, X_2}(x_0,x_1,x_2) \cdot \Gamma(y|x_0,x_1,x_2).\IEEEeqnarraynumspace
\end{IEEEeqnarray} 
Moreover, by \eqref{eq:gg} and \eqref{eq:keepg0}, and Definition~\eqref{eq:g0}, the relevant mutual informations are also preserved:
\begin{equation}
H(\widetilde{X}_0|\widetilde V, \widetilde{X}_2) - H(\widetilde Y |\widetilde V, \widetilde{X}_2) = H(X_0 |V, X_2) - H(Y|V, X_2),
\end{equation}
and as a consequence, by \eqref{eq:keep_joint_law},
\begin{equation}
 I(\widetilde Y ;\widetilde V| \widetilde{X}_2) -I(\widetilde{X}_0;\widetilde V| \widetilde{X}_2) =  I( Y ; V|X_2) -I({X}_0; V|X_2).
\end{equation}
This concludes the proof.

\section{Proof of Theorem~\ref{thm1}}\label{sec:state-amp}
\subsection{Achievability} 
Consider a joint distribution $Q_{UVX_0X_1X_2Y}\in\Delta(\mathcal{U}\times \mathcal{V}\times \mathcal{X}_0\times\mathcal{X}_1\times\mathcal{X}_2\times\mathcal{Y} )$ and a decoding function $g\colon \set{U}\times  \set{Y}\to \set{X}_2$ that satisfy Conditions~\eqref{eq:distor_causal}--\eqref{eq:cons21} in the theorem.

Fix small $\epsilon > \tilde \epsilon > \epsilon_3 > \epsilon_2 > \epsilon_1 >0$, and pick positive rates $R, \tilde{R}$ in a way that we specify later on.  

\textit{Codebooks generation:}
Split the blocklength $T$ into $B$ blocks each of length $n\eqdef\left\lfloor T/B\right \rfloor$. For each block $b\in\{1,\ldots, B\}$ randomly generate a codebook $\set{C}_{U}^{(b)}$ containing the $n$-length codewords $\{u^{(b)}(1),\ldots, u^{(b)}(\lfloor 2^{nR}\rfloor)\}$ and a codebook $\set{C}_{V}^{(b)}$ containing the $n$-length codewords $\big\{v^{(b)}(1,1),\ldots, v^{(b)}\big(\lfloor 2^{nR}\rfloor,\lfloor 2^{n\tilde{R}}\rfloor\big)\big\}$.  All entries of all codewords of codebook $\set{C}_{U}^{(b)}$  are drawn i.i.d. according to the marginal distribution $Q_{U}$. Independent thereof, all entries of all codewords of codebook $\set{C}_{V}^{(b)}$  are drawn i.i.d. according to the marginal distribution $Q_{V}$. 

\textit{Encoding:} Set $i_1=j_B=1$. For each block $b\in\{1,\ldots, B\}$, let $x_{0}^{(b)}$ denote the state sequence corresponding to block $b$.

For each block $b\in\{1,\ldots, B\}$, the encoder (Agent~$1$) looks for an index $i_{b}\in \{1,\ldots, \lfloor 2^{nR}\rfloor\}$ such that 
\begin{equation}
\Big(x_{0}^{(b)},u^{(b)}(i_{b})\Big) \in \set{T}_{\epsilon_1}^{(n)}(Q_{X_0U}).
\end{equation}
If there is more than one such index, it chooses the smallest among them, otherwise it declares an error.
For $b=1,\ldots, B-1$, set $j_b=i_{b+1}$. 

For each block $b \in\{1,\ldots, B\}$, the encoder looks for an index $\ell_{b}\in \{1,\ldots, \lfloor 2^{n\tilde{R}}\rfloor\}$ such that 
\begin{equation}
\Big(x_{0}^{(b)},u^{(b)}(i_{b}), v^{(b)}(j_b, \ell_b\big)\Big) \in \set{T}_{\epsilon_2}^{(n)}(Q_{X_0UV}). 
\end{equation}
If there is at least one such index, it picks one of them at random, otherwise it declares an error.
The encoder finally produces its $t'$-th input of block $b$, $x_{1,(b-1)n+t'}$, by applying  the conditional law $Q_{X_1|UVX_0}$ to the triple of symbols obtained by taking the $t'$-th components of the codewords $u^{(b)}(i_{b})$ and  $v^{(b)}(j_b, \ell_b)$ and the state vector $x_{0}^{(b)}$. 

\textit{Decoding:} Let $\hat{i}_1=1$.  
Fix $t\in\{1,\ldots,T\}$ and let $b$ denote the block to which time $t$ belongs to, i.e., $b= \lceil t/n \rceil$. Decoding at time $t\in\{1,\ldots, T\}$ depends on output $y_t$ and on the index $\hat{i}_{b}$ that---as we will see in a moment---the decoder (Agent~$2$) produced in a previous decoding step. Specifically, the decoder produces $x_{2,t}$  by applying the decoding function $g$ to the  $(t-(b-1)n)$-th component of codeword $u^{(b)}(\hat{i}_b)$ and to $y_t$. 



If $t$ is a multiple of $n$, i.e., we reached the end of a block, the decoder also looks for indices $(\hat{j}_{b}, \hat{\ell}_b)\in\big\{1,\ldots, \lfloor 2^{nR}\rfloor\big\} \times \big\{1,\ldots,\lfloor 2^{n\tilde{R}} \rfloor \big\}$ such that
\begin{equation}
\Big(u^{(b)}(\hat{i}_{b}), v^{(b)}(\hat{j}_b, \hat{\ell}_b), y^{(b)}\Big) \in \set{T}_{\epsilon_3}^{(n)}(Q_{UVY}). 
\end{equation}
If there is at least one such index, pick one of them at random. Otherwise declare an error.
Set $\hat{i}_{b+1}=\hat{j}_b$. 

\textit{Analysis:}
{We analyze the expected average distortion, where the expectation is taken with respect to the choice of the codebooks and the  random realizations of the state and the channel.}
Define for each block $b\in\{1,\ldots, B\}$, 
\begin{align}
  E_b \eqdef \Big\{(X_0^{(b)},X_1^{(b)},X_2^{(b)})\notin \set{T}_{\tilde{\epsilon}}^{(n)}(Q_{X_0X_1X_2})\Big\},
\end{align}
and 
\begin{align}
  E_{2:B} \eqdef \bigcup_{b=2}^B E_b.
\end{align}
We proceed to show that $\mathrm{P}(E_{2:B})$ can be made arbitrarily small for $n$ sufficiently large. 
We introduce the following events in each block $b\in\{1,\dots,B\}$.
\begin{align*}
  E_0^{(b)} &\triangleq \Big\{ \Big( X_{0}^{(b)},U^{(b)}(i)\Big) \notin \set{T}_{\epsilon_1}^{(n)}(Q_{X_0U})  \,\forall \, i \in \big\{1,\dots,\lfloor 2^{nR}\rfloor\big\}  \Big\}
  \\
  E_1^{(b)}&\triangleq \Big\{\Big(X_{0}^{(b)},U^{(b)}(i_{b}), V^{(b)}(j_b, \ell)) \notin \set{T}_{\epsilon_2}^{(n)}(Q_{X_0UV}) \nonumber \\
  &\phantom{==================} \forall \, \ell \in \big\{1,\dots,\lfloor2^{n\tilde{R}}\rfloor\big\}\Big\}\\
    E_2^{(b)}&\triangleq \Big\{\Big(X_{0}^{(b)},U^{(b)}(\hat{i}_{b}), V^{(b)}(j_b, \ell_b), X_{1}^{(b)}, Y^{(b)}\Big)\nonumber \\
     & \hspace{4cm} \notin \set{T}_{\epsilon_3}^{(n)}(Q_{{X_0UVX_1Y}}) \Big\}  \nonumber \\
  E_3^{(b)}&\triangleq \Big\{\Big( U^{(b)}(\hat{i}_{b}), V^{(b)}(j, \ell),  Y^{(b)}\Big) \in \set{T}_{\epsilon_3}^{(n)}(Q_{UVY})  \nonumber \\
  &\phantom{========} \textnormal{for some } j \in \{1,\dots,\lfloor2^{nR}\rfloor\} \backslash \{j_b\},  \nonumber \\
  & \hspace{4.6cm} \;\ell \in \big\{1,\dots,\lfloor2^{n\tilde{R}}\rfloor\big\}  \Big\}\nonumber \\
      E_4^{(b)}&\triangleq \Big\{\Big(X_{0}^{(b)},U^{(b)}(\hat{i}_{b}), V^{(b)}(j_b, \ell_b), X_{1}^{(b)}, Y^{(b)}, X_{2}^{(b)}\Big)\nonumber \\
     & \hspace{4cm} \notin \set{T}_{\tilde \epsilon}^{(n)}(Q_{{X_0UVX_1YX_2}}) \Big\}.
\end{align*} 
The probability $\mathrm{P}(E_{2:B})$ may be upper bounded as:\footnote{Here we also used the fact that event $E_2^{(b)c} \cap E_3^{(b)c}$ implies $\hat{i}_{b+1}=i_{b+1}$.}
\begin{align}
  \mathrm{P}(E_{2:B})& \leq \sum_{b=1}^{B} \bigg[ \mathrm{P}\big(E_0^{(b)} \big) + \mathrm{P} \big(E_1^{(b)}|E_0^{(b)c} \big)  \nonumber \\
  &\hspace{1cm}+ \mathrm{P} \big(E_2^{(b)}|E_1^{(b)c} \big) +\mathrm{P} \big(E_3^{(b)}|E_2^{(b)c} \big) \bigg] \nonumber \\
  & \hspace{1cm} +  \sum_{b=2}^{B} \mathrm{P} \big(E_4^{(b)}|E_2^{(b)c} \big)
 \end{align}
Throughout this paragraph, $\delta(\epsilon)$ stands for a function that tends to 0 as $\epsilon\to 0$.
\begin{itemize}
\item By the covering lemma \cite{elgamalkim-2011}, if $R > I_Q(X_0;U) + \delta(\epsilon_1)$, then for any $b \in \{1,2,\dots, B \}$: 
\begin{equation}
    \lim_{n\rightarrow\infty}\mathbb{E}\left(\mathrm{P}\big(E_0^{(b)}\big)\right)=0.
\end{equation}
%
\item By the covering lemma, if $\tilde{R} > I_Q(V;X_0 , U) + \delta(\epsilon_2)$, then for any $b \in \{1,2,\dots, B\}$: 
\begin{equation}
    \lim_{n\rightarrow\infty}\mathbb{E}\left(\mathrm{P}\big(E_1^{(b)}|E_0^{(b)c}\big)\right)=0.
\end{equation}

\item By the conditional typicality lemma \cite{elgamalkim-2011}, for any $b \in \{1,2,\dots, B\}$: 
\begin{equation}
    \lim_{n\rightarrow\infty}\mathbb{E}\left(\mathrm{P}\big(E_2^{(b)}|E_1^{(b)c} \big) \right)=0.
\end{equation}
\item By the packing lemma \cite{elgamalkim-2011}, if $R + \tilde{R} < I_Q(V;Y, U) - \delta(\epsilon_3)$, then for any $b \in \{1,2,\dots, B\}$: 
\begin{equation}
    \lim_{n\rightarrow\infty}\mathbb{E}\left(\mathrm{P}\big(E_3^{(b)}|E_2^{(b)c} \big) \right)=0.
\end{equation}


\end{itemize}

Whenever  $I_Q(X_0;U) < I_Q(V;Y,U)- I_Q(V;X_0,U)=I_Q(V;Y|U)- I_Q(V;X_0|U)$, and $\epsilon_1, \epsilon_2, \epsilon_3>0$ are sufficiently small,  it is possible to find rates $R, \tilde{R}>0$ such that 
\begin{subequations}\label{rate-constraints}
\begin{IEEEeqnarray}{rCl}
R&>& I_Q(X_0;U) +\delta(\epsilon_1)\\
\tilde{R} &>& I_Q(V;X_0,U)+\delta(\epsilon_2)\\
R+\tilde{R} &<& I_Q(V;Y, U)- \delta(\epsilon_3),
\end{IEEEeqnarray}
\end{subequations}

Thus, we conclude that $\mathrm{P}(E_{2:B})$ can be made arbitrarily small by choosing $n$ sufficiently large and $\epsilon_1, \epsilon_2, \epsilon_3>0$ sufficiently small. 
Define now 
\begin{equation}\label{eq:typicality}
E \eqdef \Big\{(X_0^{T},X_1^{T},X_2^{T})\notin \set{T}_{{\epsilon}}^{(T)}(Q_{X_0X_1X_2})\Big\}.
\end{equation}
 Since $\epsilon>\tilde \epsilon$, by choosing $B$ sufficiently large, the probability $\mathrm{P}( E )$ can be made as close to  $\mathrm{P}(E_{2:B})$ as one wishes. Thus, we conclude that when $n, B$ are sufficiently large and  $\epsilon_1, \epsilon_2, \epsilon_3>0$ are sufficiently small, it is possible to have 
\begin{equation}\label{eq:Ebound}
\mathrm{P}( E ) < \epsilon.
\end{equation}

Assume now that \eqref{eq:Ebound} holds. 
Under this assumption, we can bound the expected distortion (where the expectation is with respect to the choice of the codebooks and the channel realization) by
\begin{IEEEeqnarray}{rCl}
\lefteqn{
\mathbb{E}\left( \frac{1}{T}\sum_{t=1}^T \delta(X_{0,t}, X_{2,t})\right)}\qquad  \nonumber \\
 & = &\mathbb{E}\left( \frac{1}{T}\sum_{t=1}^T \delta(X_{0,t}, X_{2,t}) \Big| E\right)\mathrm{P}( E ) \nonumber \\
 & &  +   \mathbb{E}\left( \frac{1}{T}\sum_{t=1}^T \delta(X_{0,t}, X_{2,t}) \Big| E^c\right)\mathrm{P}( E^c ) \nonumber \\
 & \leq & d_{\max} \mathrm{P}( E ) + (D+ \epsilon d_{\max} ) \mathrm{P}( E^c )\nonumber \\
 & \leq & D+  2 \epsilon d_{\max} ,
 \end{IEEEeqnarray} 
 where the first equality follows by the total law of expectations; the first inequality because the distortion function is bounded and by the definition of the typical set $\set{T}_{{\epsilon}}^{(T)}(Q_{X_0X_1X_2})$; and the last inequality by \eqref{eq:Ebound} and because a probability cannot exceed 1.

{We see that the expected average distortion---where the expectation is taken with respect to the choice of the codebooks and the realizations of the state and the channel---can be made smaller than $D+2\epsilon d_{\max}$ when $T$ is sufficiently large.} {As a consequence, there must be at least one realization of all  codebooks  such that the expected average distortion is no larger than $D+2\epsilon d_{\max}$. Since $\epsilon>0$ can be chosen arbitrarily close to 0,} this concludes the achievability proof. 

By continuity  the above proof can be applied also when  $I_Q(X_0;U) \leq  I_Q(V;Y|U)- I_Q(V;X_0|U)$ holds with equality.

\subsection{Converse}
Let $D>0$ be an achievable distortion. Fix $\epsilon>0$ and $T$ sufficiently large. 
By definition, there exist coding and decoding functions $\{\sigma_t^{\text{NC}}\}_{t=1}^T$ and $\{\tau_t^{\text{C}}\}_{t=1}^T$ so that the sequences $X_1^T, X_2^T, Y^T$ induced by these functions and by the channel \eqref{eq:DMC2}, satisfy
\begin{equation}\label{eq:close8}
{\mathbb{E}\left( \frac{1}{T}\sum_{t=1}^T \delta(X_{0,t}, X_{2,t})\right) \leq D+ \epsilon. }
\end{equation}
 Let  $Z$ be a random variable that is uniformly distributed over $\{1,\ldots, T\}$ independent of $X_0^T, X_1^T, X_2^T,Y^T$, and define for each $t$, 
\begin{IEEEeqnarray}{rCl}\label{eq:VUt}
U_t & \eqdef & (Y^{t-1})\nonumber \\
V_t & \eqdef & (X_{0,t+1}^T).
\end{IEEEeqnarray}
Finally, let $U\eqdef (U_Z,Z)$, 
$V\eqdef (V_Z,Z)$, $X_0\eqdef X_{0,Z}$,  $X_1\eqdef X_{1,Z}$,  $X_{2}\eqdef X_{2,Z}$, and $Y\eqdef Y_Z$, and
denote the probability distribution of the tuple $(U,V,X_0,X_1,X_2,Y)$ by $Q_{UVX_0X_1X_2Y}$. By these definitions,
\begin{equation}\label{eq:average_over_time8}
\mathbb{E}\left(\delta(X_0,X_2)\right)= \mathbb{E}\left(\frac{1}{T}\sum_{t=1}^T \delta(X_{0,t}, X_{2,t})\right).
\end{equation}
and
\begin{IEEEeqnarray}{rCl}\label{eq:channel_factor8}
\lefteqn{
Q_{UVX_0X_1X_2Y}(u,v,x_0,x_1,x_2,y)} \nonumber \\
&  = & \rho_0(x_0)Q_{UVX_1X_2|X_0}(u,v,x_1,x_2|x_0) \Gamma(y|x_0,x_1). \end{IEEEeqnarray}
The Markov chain {$Y - (X_0,X_1) - (U,V, X_2)$} holds because by the i.i.d.-ness of the channel~\eqref{eq:DMC2}, $Y_t - (X_{0,t}, X_{1,t}) - (X_0^T, X_1^T,  X_2^T, Y^{t-1}, Y_{t+1}^T)$ forms a Markov chain for any $t\in\{1,\ldots, T\}$, and because the time-sharing random-variable $Z$ is independent of $(X_0^T, X_1^T, X_2^T, Y^T)$.

In the following, we prove that 
\begin{equation}\label{eq:information_constraint_to_prove8}
I_Q(X_0;U)\leq I_Q(V;Y|U)-I_Q(V;X_0|U),
\end{equation}
which combined with  \eqref{eq:close8}, \eqref{eq:average_over_time8}, and  \eqref{eq:channel_factor8}, by continuity, establishes the desired converse. 

\allowdisplaybreaks[4]
To prove \eqref{eq:information_constraint_to_prove8}, we first notice that on one hand,
\begin{IEEEeqnarray}{rCl}
\lefteqn{\frac{1}{T} I(X_0^T;Y^T)}\nonumber \\
 &=& \frac{1}{T} \sum_{t=1}^T I(X_{0,t};Y^T|X_{0,t+1}^T) \nonumber\\
& = &  \frac{1}{T} \sum_{t=1}^T I(X_{0,t};Y^T, X_{0,t+1}^T) \nonumber\\
& =& \frac{1}{T} \sum_{t=1}^T  \big[ I(X_{0,t};Y^t, X_{0,t+1}^T) + I(X_{0,t};Y_{t+1}^T| Y^t, X_{0,t+1}^T)\big]\nonumber\\
&= & \frac{1}{T} \sum_{t=1}^T  \big[  I(X_{0,t};Y_t,  V_t, U_{t}) + 
I(X_{0,t};Y_{t+1}^T| Y_t, U_t,V_t)\big],\label{eq:up18}\nonumber \\
& = &{ I(X_{0,Z}; Y_Z, V_Z, U_Z|Z) +  I(X_{0,Z};Y_{Z+1}^T| Y_Z, U_Z, V_Z, Z) }\nonumber \\
& = &{ I(X_0;Y,V,U) + I(X_{0,Z};Y_{Z+1}^T| Y_Z, U_Z, V_Z, Z).} \label{eq:up28}
\end{IEEEeqnarray}

On the other hand, we have that
\begin{IEEEeqnarray}{rCl}
\lefteqn{\frac{1}{T} I(X_0^T;Y^T)}\nonumber \\
& = & \frac{1}{T} \sum_{t=1}^T  I(X_{0,t};Y^T|X_{0,t+1}^T) \nonumber \\
& = & \frac{1}{T} \sum_{t=1}^T \Big[ I(X_{0,t};Y^T|X_{0,t+1}^T)  + I(Y^t;X_{0,t+1}^T) \nonumber \\
&& \phantom{=====================} - I(Y^{t-1};X_{0,t}^T) \Big]\nonumber \\
& =& \frac{1}{T} \sum_{t=1}^T \Big[ I(X_{0,t}^T;Y^t)  + I(X_{0,t}; Y_{t+1}^T|Y^{t}X_{0,t+1}^T) \nonumber \\
&& \phantom{=====================}  - I(Y^{t-1};X_{0,t}^T) \Big]\nonumber \\
&=& \frac{1}{T} \sum_{t=1}^T \Big[ I(X_{0,t}^T;Y_t |Y^{t-1})  + I(X_{0,t}; Y_{t+1}^T|Y^{t}X_{0,t+1}^T)  \Big]\nonumber \\
& =& \frac{1}{T} \sum_{t=1}^T \Big[ I(X_{0,t}, V_t; Y_t|U_t) +  I(X_{0,t};Y_{t+1}^T| Y_t,  U_t,V_t)  \Big] \nonumber \label{eq:up38}\nonumber \\
& =& I(X_{0,Z}, V_Z; Y_Z|U_{Z}, Z) +  I(X_{0,Z};Y_{Z+1}^T| Y_Z, U_Z, V_Z, Z) \nonumber \\
&=& I(X_0, V;Y|U) + I(X_{0,Z};Y_{Z+1}^T| Y_Z, U_Z, V_Z, Z). \label{eq:up48}
\end{IEEEeqnarray}
Combining \eqref{eq:up28} and \eqref{eq:up48}, we obtain
\begin{IEEEeqnarray}{rCl}
I_Q(X_0;Y, V, U) \leq  I_Q(X_0, V;Y|U),
\end{IEEEeqnarray}
which by  chain rule of mutual information is equivalent to 
\begin{IEEEeqnarray}{rCl}\label{eq:const8}
I_Q(X_0;U) &\leq& I_Q(X_0, V;Y|U)- I_Q(X_0; Y,V|U)\nonumber \\
&= & I_Q(V;Y|U) - I_Q(V;X_0|U). 
\end{IEEEeqnarray}

\bibliographystyle{IEEEtran}
\bibliography{bib-TIT-v9}

\begin{thebibliography}{10}
\providecommand{\url}[1]{#1}
\csname url@samestyle\endcsname
\providecommand{\newblock}{\relax}
\providecommand{\bibinfo}[2]{#2}
\providecommand{\BIBentrySTDinterwordspacing}{\spaceskip=0pt\relax}
\providecommand{\BIBentryALTinterwordstretchfactor}{4}
\providecommand{\BIBentryALTinterwordspacing}{\spaceskip=\fontdimen2\font plus
\BIBentryALTinterwordstretchfactor\fontdimen3\font minus
  \fontdimen4\font\relax}
\providecommand{\BIBforeignlanguage}[2]{{%
\expandafter\ifx\csname l@#1\endcsname\relax
\typeout{** WARNING: IEEEtran.bst: No hyphenation pattern has been}%
\typeout{** loaded for the language `#1'. Using the pattern for}%
\typeout{** the default language instead.}%
\else
\language=\csname l@#1\endcsname
\fi
#2}}
\providecommand{\BIBdecl}{\relax}
\BIBdecl

\bibitem{basar-book-2013}
T.~Ba\c{s}ar and S.~Y\"{u}ksel, \emph{Stochastic Networked Control Systems},
  Birh\"{a}user, Ed.\hskip 1em plus 0.5em minus 0.4em\relax Springer, 2013,
  vol. XVIII.

\bibitem{aumann-handbook-1994}
R.~J. Aumann and S.~Hart, \emph{Handbook of game theory with economic
  applications}.\hskip 1em plus 0.5em minus 0.4em\relax Elsevier, 1994, vol.~2.

\bibitem{Larrousse-isit2013}
B.~Larrousse and S.~Lasaulce, ``Coded power control: Performance analysis,'' in
  \emph{Proc. of IEEE International Symposium on Information Theory}, Istanbul,
  Turkey, July 2013.

\bibitem{Gossner-2006}
O.~Gossner, P.~Hernandez, and A.~Neyman, ``Optimal use of communication
  resources,'' \emph{{E}conometrica}, vol.~74, no.~6, pp. 1603--1636, 2006.

\bibitem{lasaulce-erice-2013}
S.~Lasaulce and B.~Larrousse, ``More about optimal use of communication
  resources,'' in \emph{International Workshop on Stochastic Methods in Game
  Theory}, International School of Mathematics G. Stampacchia, Erice, Italy,
  September 2013.

\bibitem{larrousse-tit-2015-sub}
B.~Larrousse, S.~Lasaulce, and M.~Bloch, ``Coordination in distributed networks
  via coded actions with application to power control,'' \emph{submitted to the
  IEEE Transactions on Information Theory}, 2014, available at
  http://arxiv.org/pdf/1501.03685.pdf.

\bibitem{Cuff-2011}
P.~Cuff and L.~Zhao, ``Coordination using implicit communication,'' in
  \emph{Proc. of IEEE Information Theory Workshop}, Paraty, Brazil, 2011, pp.
  467--471.

\bibitem{letreust-tit-2014-sub}
M.~L. Treust, ``Empirical coordination for joint source-channel coding,''
  \emph{submitted to the IEEE Transactions on Information Theory}, 2014,
  available at http://arxiv.org/abs/1406.4077.

\bibitem{ChoudhuriKimMitra-2013}
C.~Choudhuri, Y.-H. Kim, and U.~Mitra, ``Causal state communication,''
  \emph{{IEEE} {T}ransactions on {I}nformation {T}heory}, vol.~59, no.~6, pp.
  3709--3719, 2013.

\bibitem{Sutivong2005}
A.~Sutivong, M.~Chiang, T.~Cover, and Y.~H. Kim, ``Channel capacity and state
  estimation for state-dependent gaussian channels,'' \emph{{IEEE}
  {T}ransactions on {I}nformation {T}heory}, vol.~51, no.~4, pp. 1486--1495,
  2005.

\bibitem{kramer-itsl-2011}
{IEEE Information Theory Society}, Ed., \emph{Teaching IT: An Identity for the
  Gelfand-Pinsker Converse; by G. Kramer}, December 2011.

\bibitem{elgamalkim-2011}
A.~El~Gamal and Y.-H. Kim, \emph{Network information theory}.\hskip 1em plus
  0.5em minus 0.4em\relax Cambridge University Press, 2011.

\end{thebibliography}

\clearpage

\end{document}